\newtheorem{remark}{Remark}
\newtheorem{theorem}{Theorem}
\newtheorem{lemma}{Lemma}
\newtheorem{assumption}{Assumption}
\newtheorem{corollary}{Corollary}
\newtheorem{proposition}{Proposition}
\def\expandafter\normalsize\expandafter{%
    \normalsize%
    \setlength\abovedisplayskip{4pt}%
    \setlength\belowdisplayskip{4pt}%
    \setlength\abovedisplayshortskip{2pt}%
    \setlength\belowdisplayshortskip{2pt}%
}
\title{Inference-Optimal ISAC via Task-Oriented Feature Transmission and Power Allocation}
\author{
    \IEEEauthorblockN{Biao Dong, Bin Cao, and Qinyu Zhang\\
\IEEEauthorblockA{Harbin Institute of Technology, Shenzhen, China
\\E-mail: 23b952012@stu.hit.edu.cn, caobin@hit.edu.cn, and zqy@hit.edu.cn}}
\vspace{-0.7cm}
}
\begin{document}

\maketitle
\begin{abstract}
   This work is concerned with the coordination gain in integrated sensing and communication (ISAC) systems under a compress-and-estimate (CE) framework, wherein inference performance is leveraged as the key metric. To enable tractable transceiver design and resource optimization, we characterize inference performance via an error probability bound as a monotonic function of the discriminant gain (DG). This raises the natural question of whether maximizing DG, rather than minimizing mean squared error (MSE), can yield better inference performance. Closed-form solutions for DG-optimal and MSE-optimal transceiver designs are derived, revealing water-filling-type structures and explicit sensing and communication (S\&C) tradeoff. Numerical experiments confirm that DG-optimal design achieves more power-efficient transmission, especially in the low signal-to-noise ratio (SNR) regime, by selectively allocating power to informative features and thus saving transmit power for sensing.   
\end{abstract}

\section{Introduction}
Integrated sensing and communication (ISAC) has been recognized as a key enabling technology for 6G, owing to its \emph{integration gain} and \emph{coordination gain}. The well-known integration gain allows sensing and communication (S\&C) to share radio spectrum, hardware infrastructure, and signal processing modules within a unified platform \cite{liu2022integrated}. In contrast, the \emph{coordination gain} has received comparatively less attention \cite{dong2025communication}. As illustrated in Fig.~\ref{2intro_Project4_Task_ISCC}, two dominant factors determine the overall system performance: \emph{sensing noise} and \emph{fading noise}. Since the information transmitted over the communication link originates from the sensing process, efficient system resource management, including joint optimization of power, bandwidth, and waveform design, is crucial for achieving optimal cooperation between sensing and communication and thus maximizing the coordination gain.


Most existing studies address specific parameter estimation tasks (e.g., angle or velocity) and adopt estimate-and-compress (EC) based schemes \cite{dong2025communication,luo2005universal,berger1996ceo}, where local ISAC nodes estimate parameters and then transmit them. For more complex inference tasks \footnote{Also known as pattern recognition, it determines the category of a given sample and represents a task-oriented classification process.}, local ISAC devices often lack sufficient computational capability. In such cases, edge inference becomes more practical \cite{wen2023task}, i.e., offloading computation-intensive tasks to edge servers, thereby enhancing device capability and extending battery life. This approach corresponds to the compress-and-estimate (CE) based scheme.

Under the CE scheme, the key problem lies in designing the S\&C transceiver to maximize inference performance. However, directly optimizing inference performance is generally intractable due to the lack of an explicit analytical expression for the inference metric. Consequently, existing works mainly resort to surrogate or qualitative metrics. For example, \cite{lan2022progressive} proposed a tractable surrogate metric called discriminant gain (DG), which is derived under the assumption that each class follows a Gaussian distribution. Essentially, DG corresponds to the Mahalanobis distance between two classes. \cite{wen2023task} incorporated sensing noise into the DG and derived a sensing-noise-aware beamforming design. \cite{chen2024view} and \cite{wang2025ultra} further derived performance bounds based on DG. 



Different from above-mentioned, this paper investigates inference-oriented S\&C transceiver design and resource optimization under the CE scheme, with the objective of maximizing inference performance. We revisit the improvement of inference accuracy from a fundamental perspective, providing a theoretical analysis that explicitly links DG, power allocation, and system noise to the achievable inference gain. The analysis is based on the inference error probability bound, which is a monotonic function of DG. We reveal that the DG-optimal transceiver design is power-efficient compared with the conventional minimum mean square error (MMSE) criterion, as its gain arise from the power allocation strategy across heterogeneous feature dimensions, thereby saving power for sensing. 

{\textit Notations}: We use boldface letters for vectors (e.g., $\boldsymbol{r}$, $\boldsymbol{g}$), $(\cdot)^\top$ for transpose, $\mathcal{CN}(a,b^2)$ for the circularly symmetric complex Gaussian (CSCG) distribution with mean $a$ and variance $b^2$, $\mathbf{I}_{a \times a}$ for the $a \times a$ identity matrix, and $\mathrm{diag}(\cdot)$ for the diagonal matrix formed from its arguments.

\begin{figure}[!t]
    \centering
    \includegraphics[width=0.4\textwidth]{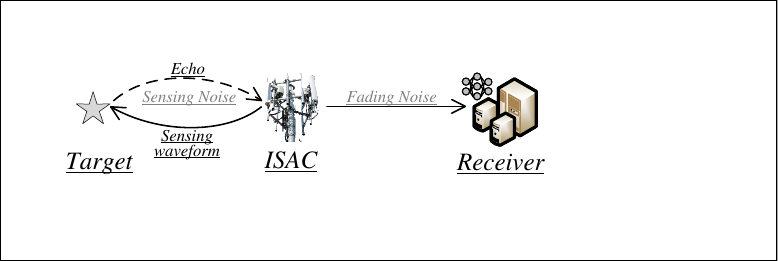}
    \caption{The considered S\&C system, where one ISAC device transmits probing signals to detect the target and forwards the compressed features of the sensed data to the receiver for inference.}
    \label{2intro_Project4_Task_ISCC}
\end{figure} 

\section{System Model} \label{sec:system_model}
As shown in Fig.~\ref{2intro_Project4_Task_ISCC}, we consider an S\&C system consisting of a single ISAC device, a target, and a receiver. Following \cite{li2023integrated,liu2018mu}, each ISAC device is equipped with three antennas in a separated deployment way, where two antennas are dedicated to target sensing and reception respectively \cite[Chapter 2.2]{jankiraman2018fmcw}, and one for communication. The signal transmitted can be expressed as
 \begin{equation}\label{Eq:signalTransmitted}
u(t) = \begin{bmatrix}
\sqrt{P_{\mathsf r}}\,{u}_{\mathsf r}[t] \\
\sqrt{P_{\mathsf c}}\,{u}_{\mathsf c}[t]
\end{bmatrix},
\end{equation}
where $P_{\mathsf r}$ and $P_{\mathsf c}$ denote the transmit power allocated to the sensing signal ${u}_{\mathsf r}[t]$ and the communication signal ${u}_{\mathsf c}[t]$, respectively, subject to a total power constraint as
\begin{equation}\label{Eq:P_r_P_c}
   P_{\mathsf r}+P_{\mathsf c} \leq P. 
\end{equation}
\subsection{Sensing Model}
\subsubsection{Sensing Signal Processing}A sensing snapshot consists of $C$ chirps, each with a duration of $T_{o}$. The sensing waveform within one snapshot, denoted as ${u}_{\mathsf r}[t]$, employs frequency-modulated continuous-wave (FMCW) modulation for target sensing, and can be expressed as
\begin{align*}
\!\!\!\!\!u_{\mathsf r}[t] 
\!\!\!=\!\!\! \sum_{c=0}^{C-1} 
   \!\!\text{rect}\!(\frac{t\!-\!cT_{o}}{T_{o}})\! 
   \cos( 
      2\pi f_{c,k}\,(t\!-\!cT_{o}) 
      \!\!+\!\! 2\pi \frac{B_{s}}{T_{o}} (t\!-\!cT_{o})^{2}),
\end{align*}
where $\text{rect}(\cdot)$ denotes a rectangular pulse function of unit width centered at $t = 0$, $f_{c}$ represents the sensing carrier frequency, and $B_{s}$ denotes the sensing bandwidth. The echo can be modeled as
\begin{equation}\label{echo}
   \!\!\!\!{r}(t) \!=\! \underbrace{\sqrt{P_{\mathsf r}}\, \alpha e^{j2\pi f_{D} t} u_{\mathsf r}(t-\tau)}_{\text{Desired echo signal}} 
+ \underbrace{v_{\mathsf r}(t) + v_{\mathsf c}(t) + n(t),}_{\text{Equivalent noise}}
\end{equation}
where $\alpha$ denotes the attenuation factor of the target round-trip, $\tau$ is the round-trip delay, $f_{D}$ denotes the Doppler shift, and $n(t)$ is the additive white Gaussian noise (AWGN). The term $v_{\mathsf r}(t)$ is the non-target clutter, which can be mitigated via pulse compression and hence approximated as an equivalent Gaussian noise by the law of large-number \cite[Eq. 54]{li2023integrated}. The term $v_{\mathsf c}(t)$ denotes the communication self-interference component. 

We assume that ${u}_{\mathsf r}[t]$ is statistically independent of ${u}_{\mathsf c}[t]$, i.e., $\mathbb{E}\{u_{\mathsf r}[t]\, u_{\mathsf c}^*[t]\} = 0$.
Based on this assumption and pulse compression, $v_{\mathsf c}(t)$ can be approximated as an equivalent Gaussian noise. Specifically, applying the matched filtering $h_{\text{MF}}(t) = u_{\mathsf r}^*(-t)$ to \eqref{echo}, the received signal is expressed as \cite{wen2023task,li2023integrated}
\begin{equation}\label{Eq:ReceivedSignalMatchedFilter}
    \bar r(t) = \underbrace{\sqrt{P_{\mathsf r}}\,\alpha e^{j2\pi f_{D} t}\, R_{\text{AU}}(t-\tau)}_{\text{Pulse-compressed target echo}}
+\!\!\!\!\!\!\! \underbrace{\tilde{n}(t)}_{\text{Equivalent noise}}\!\!\!\!\!\!,
\end{equation}where $R_{\text{AU}}(\cdot)$ denotes the autocorrelation function of the radar probing waveform, and the equivalent noise term is given by
$\tilde{n}(t) = \big(v_{\mathsf r}(t) + v_{\mathsf c}(t) + n(t)\big) \ast u_{\mathsf r}^*(-t)$, which follows $\mathcal{CN}(0,\sigma_{\mathsf r}^2)$.
The sampled radar echo is then reorganized into a sensing matrix $\bar{\mathbf{R}} \in \mathbb{C}^{f_sT_o \times C}$, where each column contains the fast-time samples acquired at rate $f_s$ (corresponding to target range) and each row represents slow-time snapshots across chirps, capturing Doppler frequency shifts \cite{wen2023task}. Vectorizing $\bar{\mathbf{R}}$ yields the slow-time sequence $\bar{\boldsymbol{r}} \in \mathbb{C}^{f_sT_o C}$. To capture the time-varying Doppler characteristics, $\boldsymbol{\bar r}$ is further processed via short-time Fourier transform (STFT). Let $f \in \mathcal{F}=\{0,1,\dots,F-1\}$ denote the frequency index and $t \in \mathcal{T}=\{0,1,\dots,T_{\rm stft}-1\}$ the time-frame index. The resulting time-frequency representation is  
\begin{equation*}
R[t,f] = \sum_{i=0}^{\mathcal{I}-1} \bar{\boldsymbol{r}}[\tau_f+i] w[i]
\exp [-j  2\pi \frac{t}{T_{\rm stft}} i ], \forall t \in \mathcal{T}, \forall f \in \mathcal{F},
\end{equation*}
where $w(i)$ denotes the window function with length $\mathcal{I}$. Applying the STFT across all time frames results in the time–frequency matrix $\mathbf{R} \in \mathbb{C}^{T_{\rm stft} \times F}$. Further, we have $\boldsymbol{r} \in \mathbb{C}^{T_{\rm stft}F}$ by vectorizing $ \mathbf{R}$. 

Since all the above operations are linear, $\boldsymbol{r}$ can be expressed in the same form of \eqref{Eq:ReceivedSignalMatchedFilter}. After normalizing each element by $\sqrt{P_{\mathsf r}}$, the echo can be represented as
\begin{equation}\label{Eq:receivedr_k}
    \boldsymbol{r} = \boldsymbol{g} + \boldsymbol{z},
\end{equation}
where $\boldsymbol{g} \in \mathbb{C}^{T_{\rm stft}F}$ represents the ground-truth sensory data of the target, and $\boldsymbol{z} \sim \mathcal{CN}(\boldsymbol{0}, \frac{\sigma_{\mathsf r}^2}{P_{\mathsf r}} \mathbf{I}_{{T_{\rm stft}F}\times {T_{\rm stft}F}})$ denotes the AWGN. 
\subsubsection{Sensing Feature Extraction}
Performing local feature extraction on \eqref{Eq:receivedr_k} using principal component analysis (PCA), the extracted feature vector is given by
\begin{equation}\label{Eq:Observation}
\begin{aligned}
\tilde{\boldsymbol{x}} = \boldsymbol{U}^\top \boldsymbol{r} = \underbrace{\boldsymbol{U}^\top \boldsymbol{g}}_{\text{Ground-truth } \boldsymbol{x}} + \underbrace{\boldsymbol{U}^\top \boldsymbol{z}}_{\text{Sensing noise }\boldsymbol{d}} = \boldsymbol{x} + \boldsymbol{d},
\end{aligned}
\end{equation}
where $\bm{U} \in \mathbb{C}^{T_{\rm stft}F \times M}$ is a unitary matrix, and $\tilde{\boldsymbol{x}} \in \mathbb{C}^M$ denotes the extracted feature vector. It consists of the ground-truth feature component $\bm{x}$ and the transformed sensing noise $\bm{d} = \bm{U}^\top \bm{z}$ \cite{wen2023task}. Since the unitary transformation does not change the variance, we have $\bm{d} \sim \mathcal{CN}(\bm{0}, \frac{\sigma_{\mathsf r}^2}{P_{\mathsf r}} \mathbf{I}_{M\times M})$. For tractability, we model $\bm{x}$ as following multivariate Gaussian distribution.
\begin{assumption}\label{assumption1}
    Considering a classification task with $L$ classes indexed by $\ell \in \mathcal{L} = \{0, \cdots, {L-1}\}$.
    The features of the $\ell$-th class follow a multivariate Gaussian distribution, i.e.,  $p(\bm{x}|\ell) \sim \mathcal{CN}(\bm{x}\,|\,\boldsymbol{\mu}_\ell, \boldsymbol{\Sigma})$ with mean vector $\boldsymbol{\mu}_{\ell}$ and diagonal covariance matrix $\boldsymbol{\Sigma} \in \mathbb{R}^{M \times M}$. Specifically, $\boldsymbol{\mu}_\ell = [\mu_{\ell,0},\mu_{\ell,1},\cdots,\mu_{\ell,M-1}]^\top$ and $\boldsymbol{\Sigma}= \mathrm{diag} (\sigma_0^2,\sigma_1^2,\cdots,\sigma_{M-1}^2)$. Assuming class prior probabilities $\pi_{\ell}$, the overall feature distribution follows a Gaussian mixture (GM) model as
    \begin{equation}
        p(\bm{x}) = \sum_{\ell=0}^{L-1} \pi_{\ell}p(\bm{x}|\ell)\overset{(a)}{=} \frac{1}{L} \sum_{\ell=0}^{L-1} p(\bm{x}|\ell), \forall \ell\in\mathcal{L},
    \end{equation}  where (a) is founded on the uniform priors assumption.
\end{assumption}
Based on the above assumption, we have $\tilde{\boldsymbol{x}} \sim \mathcal{CN}(\bm{0}, \boldsymbol{\Sigma}+\frac{\sigma_{\mathsf r}^2}{P_{\mathsf r}} \mathbf{I}_{M\times M})$. Next, we model the feature transmission channel.
\subsection{Feature Transmission Model}

We consider a block fading channel with a coherence duration of $T_{\rm cd}$ time slots. Under the CE scheme, ISAC device extracts a feature vector 
$\tilde{\boldsymbol{x}} = [\tilde{x}_0, \dots, \tilde{x}_{M-1}]^\top \in \mathbb{C}^{M}$ and transmits it over $N$ parallel subcarriers. We assume that each feature element is transmitted over a subcarrier and $M\leq N$. The received signal at the $n$-th subcarrier is
\begin{equation}
y_{n} = h_n \, b_n \, \tilde{x}_n + w, \quad n=0,\dots,N-1,
\end{equation}
where $h_n \sim \mathcal{CN}(0,1)$ denotes the channel coefficient, $b_n$ is the transmit precoding coefficient, $\tilde{x}_n$ is the transmitted feature, and $w \sim \mathcal{CN}(0, \sigma_w^2)$ represents the AWGN. By collecting all subcarriers, we have ${\boldsymbol y} = \left[{y}_0, {y}_2, \cdots,{y}_{N-1}\right]$. The total transmit power constraint across all subcarriers is
\begin{equation}\label{eq:power_constraint}
\sum_{n=0}^{N-1} |b_n|^2 \, \nu_n^2 \le P_{\mathsf c}, \quad \nu_n^2 \triangleq \mathbb{E}[|\tilde{x}_n|^2],
\end{equation}where the power allocation can leverage prior knowledge of $\nu_n^2$, which can be estimated from the offline training data samples \cite{wen2023task,chen2024view}. 
\begin{equation}\label{eq:processed}
    \hat{x}_n = a_n y_n, \quad
\hat{\boldsymbol{x}} = [\hat{x}_0, \hat{x}_2, \dots, \hat{x}_{N-1}]^\top.
\end{equation}

At the receiver, each subcarrier is processed independently with a scaling coefficient $a_n$, yielding \eqref{eq:processed}.

\begin{remark}
Here, the transmitted signal at each subcarrier is viewed as a continuous signal instead of a discrete one, i.e., discrete-time analog transmission \cite[Remark 2]{shao2021federated}, where baseband symbols are transmitted directly after OFDM modulation, instead of passband digital signals \cite{dong2025robust}. 
\end{remark}

\subsection{Inference Model}
We formulate the classifier as a function mapping the received features into a class label $\mathrm{g}(\cdot): \mathbb{C}^M \rightarrow \mathcal{L}$. To assess the performance of classifier $\mathrm{g}$, we define the classification error as
\begin{equation}
    \!\!\!P_e\!\!  =\!\!\!\sum_{\ell=0}^{L-1} \pi_{\ell} p\left(\mathrm{g}(\hat{\bm{x}} )\!\! \neq\!\! \ell \!\mid \!\ell \right) 
\!\! =\!\!\frac{1}{L}\sum_{\ell=0}^{L-1}  \int I\left(\hat{\bm{x}} ) \neq \ell\right) p(\hat{\bm{x}}|\ell)d \hat{\bm{x}},\! \!\label{Eq:ClassificationError}
\end{equation}
where $I(\cdot)$ is the indicator function yielding 1 if its argument is true and 0 otherwise. Under the uniform prior, the optimal classifier reduces to a maximum likelihood (ML) classifier as
\begin{equation}
\mathrm{g}^\star(\hat{\bm{x}} )=\arg \max _{\ell} p(\hat{\bm{x}} |\ell),\label{eq:QptimalClassifier}
\end{equation}
which minimizes the $P_e$ in \eqref{Eq:ClassificationError}. We can notice the Markov property of the system as ${\bm g}  \rightarrow {\bm r} \rightarrow
	\tilde{\boldsymbol{x}}\rightarrow
	\boldsymbol{y}\rightarrow \hat{\boldsymbol x}\rightarrow \ell.$
\section{Qualification of Inference Accuracy}\label{sec:Theoretical_Analysis}
The inference accuracy depends on $p\left(\mathrm{g}(\hat{\bm{x}} ) \neq \ell \mid \ell \right)$ in \eqref{Eq:ClassificationError}. To optimize the inference accuracy, we should first derive its closed-form expression. However, it is difficult to obtain an explicit expression for the exact error probability. Hence, we resort to using error probability bound to enable analytical tractability.

\begin{theorem}\label{theorem1}
The inference error probability can be lower bounded as
\begin{equation}\label{Eq:unionBound}
    P_e \geq (L-1)Q\Big( \sqrt{\frac{\mathrm{DG}_{\min}}{2}} \Big),
\end{equation}
where $\mathrm{DG}_{\min}$ denotes the minimum DG between any two classes, and $Q(x)=\frac{1}{\sqrt{2 \pi}} \int_x^{\infty} e^{-\frac{t^2}{2}} d t$ denotes the Gaussian $Q$-function. 
\end{theorem}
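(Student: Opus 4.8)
The plan is to reduce the $L$-class error probability to a family of binary Gaussian hypothesis tests and to read the $Q$-function directly off the discriminant gain. First I would write $P_e = \frac{1}{L}\sum_{\ell} P(E_\ell)$, where $E_\ell$ is the ML misclassification event given that class $\ell$ is true, and decompose it as $E_\ell = \bigcup_{\ell' \neq \ell} A_{\ell \to \ell'}$, with $A_{\ell \to \ell'}$ the event that the likelihood of $\ell'$ exceeds that of $\ell$. By Assumption~\ref{assumption1} and the linear per-subcarrier processing in \eqref{eq:processed}, the post-channel feature $\hat{\boldsymbol{x}}$ is conditionally Gaussian with a common (diagonal) effective covariance across classes and class-dependent means; only the means depend on $\ell$. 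Each binary test therefore reduces to thresholding a scalar log-likelihood ratio that is itself Gaussian, whose normalized mean separation is precisely the Mahalanobis distance between $\boldsymbol{\mu}_\ell$ and $\boldsymbol{\mu}_{\ell'}$. With the paper's normalization of DG this gives the pairwise error $P(A_{\ell \to \ell'}) = Q\big(\sqrt{\mathrm{DG}_{\ell\ell'}/2}\big)$.

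Next I would isolate the worst-case pair. Since $Q$ is monotonically decreasing and $\mathrm{DG}_{\ell\ell'} \ge \mathrm{DG}_{\min}$ for every pair, the largest pairwise error is the one attaining $\mathrm{DG}_{\min}$, and each true class faces $L-1$ competing alternatives. Collecting these $L-1$ contributions and evaluating the $Q$-function at $\sqrt{\mathrm{DG}_{\min}/2}$ is exactly what produces the factor $(L-1)$ and the argument appearing in \eqref{Eq:unionBound}. The pairwise computation itself --- forming the LLR, cancelling the shared covariance, and recognizing the Mahalanobis form --- is routine once conditional Gaussianity is established, so I would not dwell on it.

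The step I expect to be the main obstacle is the direction of the inequality. Aggregating the pairwise events through the union bound gives $P(E_\ell) \le \sum_{\ell' \neq \ell} P(A_{\ell \to \ell'})$, and since each term satisfies $P(A_{\ell \to \ell'}) \le Q(\sqrt{\mathrm{DG}_{\min}/2})$, this chain naturally yields an \emph{upper} bound $P_e \le (L-1)Q(\sqrt{\mathrm{DG}_{\min}/2})$ rather than the stated lower bound. To defend \eqref{Eq:unionBound} as written I would therefore need a reverse argument: either a Bonferroni-type lower bound $P(E_\ell) \ge \sum_{\ell'} P(A_{\ell \to \ell'}) - \sum P(A_{\ell \to \ell'} \cap A_{\ell \to \ell''})$ with controlled overlaps, or a converse/genie reduction to a single nearest competitor. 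The tension is that each pairwise term is itself at most $Q(\sqrt{\mathrm{DG}_{\min}/2})$, so the summed quantity cannot exceed $(L-1)Q(\sqrt{\mathrm{DG}_{\min}/2})$; clarifying how the claimed lower bound is obtained is the point at which I would examine the author's derivation most carefully.
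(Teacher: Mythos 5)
Your derivation retraces the paper's own proof exactly: pairwise binary Gaussian tests, the Mahalanobis/DG form of the log-likelihood separation, and aggregation over the $L-1$ competitors by the union bound. The obstacle you flag at the end is not a defect of your argument but of the theorem itself. The paper's proof does precisely what you describe --- it writes the union bound $P(E_0)\le\sum_{\ell'\neq 0}P(\hat{\ell}=\ell'\mid \ell=0)$ with each pairwise term a $Q$-function --- and then asserts \eqref{Eq:unionBound} with the inequality pointing the wrong way. What this argument establishes is the \emph{upper} bound $P_e \le (L-1)\,Q\big(\sqrt{\mathrm{DG}_{\min}/2}\,\big)$ for the ML classifier, exactly as you conclude; and no Bonferroni or genie argument can reverse it, because the stated lower bound is false. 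Concretely, take $L=3$ scalar classes with means $-d$, $0$, $d$ and common CSCG variance $\sigma^2=1$, so that $\mathrm{DG}_{\min}=d^2$. The exact ML error is
\begin{equation*}
P_e=\tfrac{1}{3}\Big[Q\big(\tfrac{d}{\sqrt{2}}\big)+2\,Q\big(\tfrac{d}{\sqrt{2}}\big)+Q\big(\tfrac{d}{\sqrt{2}}\big)\Big]
=\tfrac{4}{3}\,Q\Big(\sqrt{\tfrac{\mathrm{DG}_{\min}}{2}}\Big),
\end{equation*}
which is strictly below the claimed bound $(L-1)\,Q\big(\sqrt{\mathrm{DG}_{\min}/2}\,\big)=2\,Q\big(\sqrt{\mathrm{DG}_{\min}/2}\,\big)$.

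Two statements do survive, and either would serve the paper's purpose. First, the union-bound \emph{upper} bound above, which is what the downstream reasoning actually needs: increasing $\mathrm{DG}_{\min}$ is only meaningful as a design objective if it tightens a guarantee, i.e., an upper bound on error. Second, a genuine lower bound obtained by keeping only the worst pair: if $(\ell_1,\ell_2)$ attains $\mathrm{DG}_{\min}$, then each of $P(E_{\ell_1})$ and $P(E_{\ell_2})$ is at least the pairwise Bayes error, giving $P_e \ge \tfrac{2}{L}\,Q\big(\sqrt{\mathrm{DG}_{\min}/2}\,\big)$ --- note the factor $\tfrac{2}{L}$, not $L-1$. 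Both preserve the monotone link between $\mathrm{DG}_{\min}$ and inference accuracy, but neither is the inequality printed in Theorem~\ref{theorem1}. Your instinct to stop and question the direction, rather than manufacture a reverse argument, was the right call.
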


\begin{proof}
We first consider the simplified scalar binary classification problem, where each class follows a unimodal Gaussian distribution, i.e, $x_\ell \sim \mathcal{CN}(\mu_\ell, \sigma^2), \quad \ell \in \{0,1\}.$ \footnote{Here, we focus on characterizing inference performance and neglect the effects of sensing and communication noise, i.e., we consider $\boldsymbol{x}$ rather than $\tilde{\boldsymbol{x}}$ in \eqref{Eq:Observation}. Since only two classes are involved, the multi-class classification problem reduces to a binary hypothesis testing problem. Moreover, with each class feature being one-dimensional, the communication system reduces to a single-carrier scenario ($N=1$), thus $\sigma_n^2$ reduces to $\sigma^2$.} The Bayes error serves as the lower bound on the error probability \cite[Chapter 3]{fukunaga2013introduction}. Assuming a uniform prior, the minimum error probability achieved by the ML detector is given by  
\begin{equation}\label{The_error_probability_binary_classification}
P_e = Q\!\left(\sqrt{\frac{\mathrm{DG}}{2}}\right), 
\end{equation}
where $\mathrm{DG}$ is the discriminative gain (variance-normalized distance between the two classes) \cite{wen2023task,lan2022progressive,chen2024view,wang2025ultra}, defined as
\begin{equation}\label{eq:scalarcase}
\mathrm{DG} = \frac{(\mu_0 - \mu_1)^2}{\sigma^2}.
\end{equation}
We then generalize the binary classification to the multivariate Gaussian case, i.e., ${\boldsymbol{x}}_{\ell} \sim \mathcal{CN}(\boldsymbol{\mu}_\ell, {\boldsymbol{\Sigma}}),\quad \ell \in \{0,1\}$, as shown in Assumption~\ref{assumption1}. By slight abuse of notation, the log-likelihood ratio test reduces to comparing the DG as
\begin{equation}\label{eq:DG_Matrix}
\!\mathrm{DG}\! = \!(\boldsymbol{\mu}_0 - \boldsymbol{\mu}_1)^\top {\boldsymbol{\Sigma}}^{-1} (\boldsymbol{\mu}_0 - \boldsymbol{\mu}_1) \!\overset{(a)}{=}\! \sum_{m=0}^{M-1}\!\!\frac{(\mu_{0,m}\! -\! \mu_{1,m})^2}{\sigma_m^2},
\end{equation}
where (a) follows from the statistical independence of the feature dimensions after PCA. Since $\eqref{eq:DG_Matrix}$ is the direct generalization of the scalar case \eqref{eq:scalarcase}, the error probability can be expressed as $P_e = Q\!\left(\sqrt{\frac{\mathrm{DG}}{2}}\right)$.

Finally, we further generalize the binary classification with multivariate Gaussian to multiple classes. It is difficult to obtain an explicit expression for the exact error probability. So, we examine the pairwise union bound. Without loss of generality, we can assume $\ell=0$ is transmitted, the union bound says that $P(\hat{\ell}=1|{\ell=0})+P(\hat{\ell}=2|{\ell=0})+\cdots+P(\hat{\ell}=L-1|{\ell=0})$, we have \eqref{Eq:unionBound} \footnote{We note that this result is independently arrived at \cite{wang2025ultra} with a different heuristic proof.}.
\end{proof}
\begin{remark}\label{remark2}
The DG can be interpreted as a generalized form of signal-to-noise ratio (SNR). Class information in an inference task can be viewed as constellation points in digital modulation \cite[Eq. 3.13]{tse2005fundamentals}. Hence, the transmitted features serve as the information-bearing entities. Consider the simplest Rayleigh fading channel model: $y = h x + w$, where $h \in \mathbb{C}$ and $w \sim \mathcal{CN}(0,\sigma_w^2)$. Then, $y \sim \mathcal{CN}(h \mu_\ell, |h|^2 \sigma^2 + \sigma_w^2)$, and the resulting DG at the receiver is $ \frac{|h|^2 |\mu_0 - \mu_1|^2}{|h|^2 \sigma^2 + \sigma_w^2}.$
\end{remark}
Since $Q(\cdot)$ is monotonically decreasing, Theorem 1 implies that increasing $\mathrm{DG}_{\min}$ reduces the lower bound of the misclassification probability. Conventionally, communication and sensing systems are designed with the objective of minimizing the mean squared error (MSE). This naturally raises the question of whether maximizing the DG could lead to improved inference performance.  

\section{Two Cases}\label{sec:Three_Cases}
To gain further insights into the above question, we present two case studies on feature transmission under two design criteria: MSE-optimal and DG-optimal designs.

\subsection{Single-Carrier Case}\label{sec:MSEversusDiscriminativeGain}
We first consider the scalar binary classification case, i.e., ${x}_{\ell} \sim \mathcal{CN}({\mu}_\ell, \sigma^2)$ and $\ell \in \mathcal{L} = \{0,1\}$. The target probability distribution is thus given by $p(x) = \frac{1}{2} \sum_{\ell=0}^{L-1} p(x_\ell|\ell), \quad \forall \ell \in \mathcal{L}$. The exact error probability, given in \eqref{The_error_probability_binary_classification}, is a monotonically decreasing function of the DG. Therefore, we next examine the achievable DG under the MSE-optimal design and then investigate whether direct DG maximization can yield additional performance gains \footnote{Here, we focus on the communication process and neglect the impact of sensing noise $\sigma_{\mathsf r}^2$.}.
\begin{proposition}\label{proposition1} Under the MMSE design criterion and a transmit power constraint $P_{\mathsf c}$, the optimal S\&C transceiver (i.e., transmit precoding and receive scaling coefficients) are
\begin{equation}\label{eq:optimalprecoding}
b^\star = \sqrt{\frac{P_{\mathsf c}}{\nu^2}} \, e^{-j \angle h},\qquad a^\star(b) \;=\; \frac{h\,b\,\nu^2}{\,|h\,b|^2\nu^2+\sigma_w^2\,}, 
\end{equation}
where $\angle h$ symbolizes the phase of the channel. The resulting minimum MSE is: $\mathrm{MSE}^\star = \frac{\nu^2\,\sigma_w^2}{\,|h|^2 P_{\mathsf c} + \sigma_w^2\,}$.
\end{proposition}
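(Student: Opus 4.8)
The plan is to solve the joint design of the receive scaling $a$ and the transmit precoder $b$ by a nested inner–outer optimization rather than attacking it as a single problem. First I would write the feature reconstruction error explicitly: substituting $y=h\,b\,\tilde{x}+w$ and $\hat{x}=a\,y$ into $\mathrm{MSE}=\mathbb{E}[|\hat{x}-\tilde{x}|^2]$, and using $\tilde{x}\perp w$ together with $\mathbb{E}[|\tilde{x}|^2]=\nu^2$ and $\mathbb{E}[|w|^2]=\sigma_w^2$ (with $h$ known from CSI), the objective collapses to
\[
\mathrm{MSE}(a,b)=|a\,h\,b-1|^2\,\nu^2+|a|^2\,\sigma_w^2 .
\]
This is not jointly convex in $(a,b)$, which is precisely why I would avoid a direct joint attack; instead I exploit that for any fixed $b$ the inner problem over $a$ is a standard scalar linear-MMSE estimation problem, hence convex.

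Second, fixing $b$, I would minimize over $a$. The cleanest route is the orthogonality principle (Wiener filter): the minimizer satisfies $\mathbb{E}[(\hat{x}-\tilde{x})\,y^*]=0$, giving $a^\star(b)=\mathbb{E}[\tilde{x}\,y^*]/\mathbb{E}[|y|^2]=(hb)^*\nu^2/(|hb|^2\nu^2+\sigma_w^2)$, which coincides with the expression in \eqref{eq:optimalprecoding}. Substituting this back — or equivalently invoking $\mathrm{MMSE}=\nu^2-|\mathbb{E}[\tilde{x}\,y^*]|^2/\mathbb{E}[|y|^2]$ — reduces the objective to the single-variable function
\[
\mathrm{MSE}^\star(b)=\frac{\nu^2\,\sigma_w^2}{|hb|^2\,\nu^2+\sigma_w^2}.
\]

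Third, I would optimize over $b$ subject to $|b|^2\nu^2\le P_{\mathsf c}$. The key observation is that $\mathrm{MSE}^\star(b)$ depends on $b$ only through $|hb|^2=|h|^2|b|^2$ and is strictly decreasing in it; therefore the power budget is active at the optimum, $|b^\star|^2=P_{\mathsf c}/\nu^2$. The phase of $b$ does not affect the MSE, so choosing $\angle b=-\angle h$ is a convenient normalization that renders the effective gain $h\,b^\star$ real and positive, yielding $b^\star=\sqrt{P_{\mathsf c}/\nu^2}\,e^{-j\angle h}$. Plugging $|h\,b^\star|^2=|h|^2 P_{\mathsf c}/\nu^2$ into $\mathrm{MSE}^\star(b)$ then gives the claimed $\mathrm{MSE}^\star=\nu^2\sigma_w^2/(|h|^2 P_{\mathsf c}+\sigma_w^2)$.

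The main obstacle is the non-joint-convexity in $(a,b)$; the resolution is the decomposition above, since the inner $a$-problem is solved in closed form and the outer $b$-problem then reduces to maximizing a monotone function of the transmit power, so no genuine nonconvex search is required. A minor point I would flag for rigor is the phase/conjugation convention: the Wiener solution carries $(hb)^*$, whereas \eqref{eq:optimalprecoding} is written with $hb$; these reconcile exactly because the chosen $b^\star$ makes $h\,b^\star$ real, so at the optimum $(h b^\star)^*=h b^\star$. Making this alignment convention explicit closes the gap between the two forms.
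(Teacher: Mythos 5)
Your proposal is correct and follows essentially the same route as the paper's proof: fix $b$, apply the scalar Wiener/MMSE estimator to get $a^\star(b)$ and the reduced objective $\mathrm{MSE}_{\min}(b)=\nu^2\sigma_w^2/(|hb|^2\nu^2+\sigma_w^2)$, then use strict monotonicity in $|b|$ to conclude that full power with phase matching is optimal. Your additional remark on the conjugation convention --- that the Wiener solution is properly $(hb)^*\nu^2/(|hb|^2\nu^2+\sigma_w^2)$ and agrees with the paper's $hb$ form only because $hb^\star$ is real at the optimum --- is a legitimate point of rigor that the paper glosses over.
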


\begin{proof}
For a given $b$, the classic MMSE (Wiener) estimator is 
$a^\star(b)=\frac{h\,b\,\nu^2}{|h\,b|^2\nu^2+\sigma_w^2}$ \cite{kay1993fundamentals}, which yields the minimum MSE 
$\mathrm{MSE}_{\min}(b)=\frac{\nu^2\,\sigma_w^2}{|h\,b|^2\nu^2+\sigma_w^2}$.
Since $\mathrm{MSE}_{\min}(b)$ is strictly decreasing in $|b|$, the optimum is achieved by full power transmission and phase-matching under the constraint $|b|^2\nu^2 \le P_{\mathsf c}$, i.e.,
$b^\star = \sqrt{\frac{P_{\mathsf c}}{\nu^2}} \, e^{-j \angle h}$.
Substituting $b^\star$  into $\mathrm{MSE}_{\min}(b)$ yields $\mathrm{MSE}^\star$.
\end{proof}
Using the MSE-optimal precoding and receive scaling coefficients in \eqref{eq:optimalprecoding}, the resulting DG is (see Remark~\ref{remark2})
\begin{equation}\label{eq:discriminativegainMSE}
    \mathrm{DG}^{\star} = \frac{|h|^2 P_{\mathsf c} (\mu_0 - \mu_1)^2/\sigma^2}{|h|^2 P_{\mathsf c} + \sigma_w^2\nu^2/\sigma^2}, 
\end{equation}
\begin{proposition} Using the DG-optimal design, the optimal DG $\mathrm{DG}^{\star}$ is same with \eqref{eq:discriminativegainMSE}.
\end{proposition}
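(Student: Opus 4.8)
The plan is to optimize the receiver-side DG directly over the transceiver pair $(a,b)$ and show that its maximizer coincides with the full-power solution already found in Proposition~\ref{proposition1}, so that both criteria attain the identical DG. First I would write the post-processed feature as $\hat{x} = a y = a h b \tilde{x} + a w$, so that conditioned on class $\ell$ its mean is $a h b \mu_\ell$ and its variance is $|a|^2(|hb|^2\sigma^2+\sigma_w^2)$, where $\sigma^2$ is the within-class variance of $\tilde{x}$. Substituting these into the fading-channel DG form of Remark~\ref{remark2} gives
\begin{equation*}
\mathrm{DG}(a,b) = \frac{|a h b|^2 (\mu_0-\mu_1)^2}{|a|^2\big(|hb|^2\sigma^2+\sigma_w^2\big)} = \frac{|h|^2|b|^2 (\mu_0-\mu_1)^2}{|h|^2|b|^2\sigma^2+\sigma_w^2}.
\end{equation*}

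The crux of the argument — and the one genuinely conceptual step, as opposed to an obstacle — is the cancellation of $|a|^2$: the receive scaling amplifies signal and noise in equal proportion, so the DG is invariant to $a$. This is precisely the qualitative departure from the MSE criterion, where $a$ must be fixed to the Wiener filter in \eqref{eq:optimalprecoding}. For the same reason the DG depends on $b$ only through $|b|^2 = |hb|^2/|h|^2$, so the precoder phase is irrelevant, again in contrast to the phase-matching required by MSE. Consequently the DG-optimal problem collapses to a one-dimensional maximization over the transmit power $p \triangleq |b|^2$ subject to the active-or-slack constraint $p\nu^2 \le P_{\mathsf c}$.

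Finally I would carry out this scalar maximization. Writing $\mathrm{DG}(p) = \tfrac{|h|^2 p(\mu_0-\mu_1)^2}{|h|^2 p\sigma^2 + \sigma_w^2}$, its derivative in $p$ has numerator $|h|^2(\mu_0-\mu_1)^2\sigma_w^2 > 0$, so $\mathrm{DG}(p)$ is strictly increasing and the power constraint is tight at the optimum, i.e. $|b|^2 = P_{\mathsf c}/\nu^2$. This is exactly the full-power allocation of the MSE-optimal precoder $b^\star$ in \eqref{eq:optimalprecoding}; substituting it and clearing the factor $\nu^2/\sigma^2$ reproduces \eqref{eq:discriminativegainMSE} verbatim, which proves the proposition. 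The takeaway is that in the single-carrier case both criteria spend all available power and the DG is blind to both the receive scaling and the precoder phase, so the two designs must coincide; any genuine divergence between DG-optimal and MSE-optimal transmission can therefore only emerge in the multi-carrier setting, where power is distributed across heterogeneous feature dimensions.
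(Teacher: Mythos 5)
Your proposal is correct and follows essentially the same route as the paper's own proof: compute the receiver-side DG, observe that the receive scaling $a$ cancels, note that the DG is monotonically increasing in $|hb|^2$ so the power constraint is tight, and substitute $|b|^2 = P_{\mathsf c}/\nu^2$ to recover \eqref{eq:discriminativegainMSE}. You merely make explicit a few steps the paper leaves implicit (the derivative calculation and the irrelevance of the precoder phase), which is fine.
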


\begin{proof}
The DG of the received signal $\hat x$ in \eqref{eq:processed} can be expressed as
\begin{equation}
    \mathrm{DG} = \frac{|a h b|^2(\mu_0 - \mu_1)^2}{|ah b|^2\sigma^2+a^2\sigma_w^2} = \frac{|h b|^2(\mu_0 - \mu_1)^2}{|h b|^2\sigma^2+\sigma_w^2}.
\end{equation}
It is clear that $\mathrm{DG}$ is monotonically increasing with $|h b|^2$ and upper-bounded by $\mathrm{DG}_{\max}=(\mu_0 - \mu_1)^2/\sigma^2$. Under the transmit power constraint, the achievable effective gain $\mathrm{DG}^{\star}$ is same with \eqref{eq:discriminativegainMSE}.
\end{proof}
The above proposition indicates that directly optimizing DG in the single-feature scenario does not yield any performance improvement over the MSE-optimal design. In this setting, the design of S\&C transceiver depends solely on $h$ and $\mathrm{DG}_{\max}$. To characterize the average system behavior, we take the expectation over $h \sim \mathcal{CN}(0,1)$ to obtain the overall achievable DG.

\begin{theorem}\label{theorem2_proof}
Since the $h$ is drawn independently from $\mathcal{CN}(0,1)$, the average $\mathrm{DG}^{\star}$ is
\begin{equation}\label{eq:average_DG}
     \mathbb{E}[\mathrm{DG}^{\star}]=
\mathrm{DG}_{\max}\left[\,1-\frac{1}{\rho}\,
e^{\frac{1}{\rho}}\,
E_1\!\left(\frac{1}{\rho}\right)\right],
\end{equation}
where $E_1(z)=\displaystyle\int_z^\infty \frac{e^{-t}}{t}\,dt$ denotes the exponential integral and $\rho =\frac{\sigma^2 P_{\mathsf c}}{\sigma_w^2\nu^2}$ denotes the equivalent receive SNR.
\end{theorem}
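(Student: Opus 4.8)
The plan is to reduce the expectation to a single integral against the distribution of $|h|^2$ and then recognize that integral as a shifted exponential integral. Since $h \sim \mathcal{CN}(0,1)$, the squared magnitude $X \triangleq |h|^2$ follows a unit-mean exponential distribution with density $f_X(x) = e^{-x}$ on $x \ge 0$. First I would rewrite $\mathrm{DG}^{\star}$ from \eqref{eq:discriminativegainMSE} in terms of $X$ and $\rho$: dividing numerator and denominator by $P_{\mathsf c}$ and using $1/\rho = \sigma_w^2\nu^2/(\sigma^2 P_{\mathsf c})$ together with $\mathrm{DG}_{\max} = (\mu_0-\mu_1)^2/\sigma^2$ gives the compact form
\begin{equation*}
\mathrm{DG}^{\star} = \mathrm{DG}_{\max}\,\frac{\rho X}{\rho X + 1}.
\end{equation*}

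Next I would take the expectation and split the rational factor as $\frac{\rho x}{\rho x + 1} = 1 - \frac{1}{\rho x + 1}$, so that
\begin{equation*}
\mathbb{E}[\mathrm{DG}^{\star}] = \mathrm{DG}_{\max}\!\left[\int_0^\infty e^{-x}\,dx - \int_0^\infty \frac{e^{-x}}{\rho x + 1}\,dx\right].
\end{equation*}
The first integral equals $1$. The remaining task is to evaluate $I \triangleq \int_0^\infty \frac{e^{-x}}{\rho x + 1}\,dx$ in closed form. I would apply the substitution $u = \rho x + 1$ (so $dx = du/\rho$), which factors out $e^{1/\rho}/\rho$ and leaves $\int_1^\infty e^{-u/\rho}/u\,du$; a second rescaling $v = u/\rho$ maps this precisely onto the definition $E_1(1/\rho) = \int_{1/\rho}^\infty e^{-v}/v\,dv$. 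Collecting the terms yields $I = \frac{1}{\rho}\,e^{1/\rho}\,E_1(1/\rho)$, and substituting back reproduces \eqref{eq:average_DG}.

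There is no substantive obstacle here; the only point requiring care is the bookkeeping of the two changes of variables, ensuring the lower limit lands at $1/\rho$ and the prefactor $e^{1/\rho}/\rho$ is tracked correctly, since a limit or sign slip would introduce a spurious constant. As a consistency check, the bracketed term lies in $(0,1)$: using $E_1(z) \sim -\gamma - \ln z$ as $z \to 0^+$, one finds $\mathbb{E}[\mathrm{DG}^{\star}] \to \mathrm{DG}_{\max}$ as $\rho \to \infty$, while using $E_1(z) \sim e^{-z}/z$ as $z \to \infty$ gives $\mathbb{E}[\mathrm{DG}^{\star}] \to 0$ as $\rho \to 0$, matching the expected high- and low-SNR behaviour.
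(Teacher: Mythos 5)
Your proposal is correct and follows essentially the same route as the paper: both express $\mathrm{DG}^{\star}$ as $\mathrm{DG}_{\max}$ times one minus a rational function of $|h|^2\sim\mathrm{Exp}(1)$, reduce the expectation to $\int_0^\infty e^{-x}/(\rho x+1)\,dx$, and evaluate it by a linear change of variables landing on $E_1(1/\rho)$ with the prefactor $e^{1/\rho}/\rho$. Your version is merely a cleaner, fully worked-out account of the paper's terse two-line derivation, with the added (and valid) asymptotic sanity checks.
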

\begin{proof}
Rewriting \eqref{eq:discriminativegainMSE} as $
\mathrm{DG}^{\star}
=\mathrm{DG}_{\max}\!\big(1-\frac{\frac{\sigma_w^2\nu^2}{\sigma^2}}{|h|^2 P_c+\frac{\sigma_w^2\nu^2}{\sigma^2}}\big)$,
since $|h|^2 \sim \mathrm{Exp}(1)$, we have
$\mathbb{E}[\mathrm{DG}^{\star}]
=\mathrm{DG}_{\max}\!\big(1-\frac{\sigma_w^2\nu^2}{\sigma^2}
\int_{0}^{\infty}\frac{e^{-x}}{P_c x+\frac{\sigma_w^2\nu^2}{\sigma^2}}\,dx\big).$
 $u = x+\frac{\sigma_w^2\nu^2}{\sigma^2 P_c}$ yields
\eqref{eq:average_DG}.
\end{proof}
Theorem~\ref{theorem2_proof} shows that $\mathbb{E}[\mathrm{DG}^{\star}]\propto \mathrm{DG}_{\max}$ when $\rho\to \infty$ and $\mathbb{E}[\mathrm{DG}^{\star}]\propto \rho\mathrm{DG}_{\max}$ when $\rho\to 0$. Intuitively, when feature dimensions are heterogeneous, performance can be further improved by allocating power across these dimensions according to their importance. This suggests that DG-optimal design offers performance gains primarily when multiple feature dimensions ($M > 1$) or parallel time/frequency resources are available, enabling selective power allocation.
\subsection{Multi-Carrier Case}\label{sec:ExtensionMSEversusDiscriminativeGain}
We now formalize the above intuition by extending \emph{MSE-DG analysis} to the case of multi-dimensional features over parallel subcarriers.
\subsubsection{MSE-Optimal design}
For a fixed transmit coefficient $b_n$, the optimal MMSE estimator is identical to Proposition \ref{proposition1} and is given by
\begin{equation*}
    a_n^\star = \frac{h_n b_n \sigma_n^2}{|h_n b_n|^2 \sigma_n^2 + \sigma_w^2}, \quad
\mathrm{MSE}_n = \frac{\sigma_n^2 \sigma_w^2}{|h_n b_n|^2 \sigma_n^2 + \sigma_w^2}.
\end{equation*}
For the MSE-optimal design, the general optimization problem can be formulated as
\begin{equation*}
(\mathrm{P}1) \min_{\{b_n \ge 0\}_{n=0}^{n=N-1}} \quad \sum_{n=0}^{N-1} \mathrm{MSE}_n, \quad
\text{s.t.}  \eqref{eq:power_constraint}.
\end{equation*}
$(\mathrm{P}1)$ is a convex problem and the optimal solution can be
obtained by checking the stationarity of Karush-Kuhn-Tucker (KKT) conditions as \cite{boyd2004convex}
\begin{equation}\label{eq:optimalsolutionmse}
b_n^\mathrm{MSE} = \frac{1}{|h_n|}\sqrt{\left(\frac{\sigma_w|h_n|}{\nu_n\sqrt{\lambda}}-\frac{\sigma_w^2}{\sigma_n^2}\right)^+}e^{-j \angle h_n},
\end{equation}
where $(\cdot)^{+} \triangleq \max\{\cdot, 0\}$ and $\lambda$ denotes the water-filling factor.
\subsubsection{DG-Optimal design}
For the DG-optimal design, the general optimization problem can be formulated as
\begin{align*} (\mathrm{P}2) \max_{\{b_n\}_{n=0}^{n=N-1}} \sum_{n=0}^{N-1}\mathrm{DG}_{n} \quad\text{s.t.}   \eqref{eq:power_constraint}, \end{align*}
where $\mathrm{DG}_{n}={\frac{|h_n b_n|^2\,(\mu_{0,n} - \mu_{1,n})^2}{|h_n b_n|^2\sigma_n^2+\sigma_w^2}}$.
 $(\mathrm{P}2)$ remains convex and and its optimal solution is
\begin{equation}\label{eq:optimalsolutiondg}
    \!b_n^\mathrm{DG} =  
\dfrac{1}{|h_n| } \sqrt{\left( \dfrac{\sigma_w|h_n| |\mu_{0,n} - \mu_{1,n}| }{\nu_n\sqrt{\lambda} \sigma_n^2} - \frac{\sigma_w^2}{\sigma_n^2} \right)^+} e^{-j \angle h_n},
\end{equation}
\subsubsection{Solution Structure Analysis}
Based on \eqref{eq:optimalsolutionmse} and \eqref{eq:optimalsolutiondg}, we have the following three insights, which are also numerically validated in Fig. \ref{Fig_MSE_optvsDG_opt}.
\begin{itemize}
    \item \textbf{Structural difference:} \eqref{eq:optimalsolutiondg} differs from that in \eqref{eq:optimalsolutionmse} by an additional term $\frac{|\mu_{0,n} - \mu_{1,n}|}{\sigma_n^2}$.
    \item \textbf{High SNR Case:} When $\sigma_w^2$ is small and $|h_n|$ is large, we have $b_n^{\mathrm{MSE}} \approx \frac{1}{|h_n|} \sqrt{\frac{\sigma_w |h_n|}{\nu_n \sqrt{\lambda}}} = \sqrt{\frac{\sigma_w}{|h_n| \, \nu_n \sqrt{\lambda}}}$ and $b_n^{\mathrm{DG}} \approx \sqrt{\frac{\sigma_w \, }{|h_n| \, \nu_n \sqrt{\lambda}}\frac{|\mu_{0,n}-\mu_{1,n}|}{\sigma_n^2}}\approx\sqrt{\frac{\sigma_w}{|h_n| \, \nu_n \sqrt{\lambda}}}.$ Both allocation strategies are approximately consistent.
    \item \textbf{Low SNR Case:} When the term $\sigma_w^2 / \sigma_n^2$ dominates, weak features may be completely turned off, i.e.,$
\frac{\sigma_w |h_n| |\mu_{0,n}-\mu_{1,n}|}{\nu_n \sqrt{\lambda} \, \sigma_n^2} - \frac{\sigma_w^2}{\sigma_n^2} < 0$ and thus $b_n^\mathrm{DG} = 0$. This results in a selective power allocation where only highly discriminative or well-conditioned features receive power. In contrast, the MSE-optimal design tends to allocate power more uniformly, potentially wasting resources on uninformative features.
\end{itemize}



\begin{figure}[!h]
    \centering
    \includegraphics[width=0.4\textwidth]{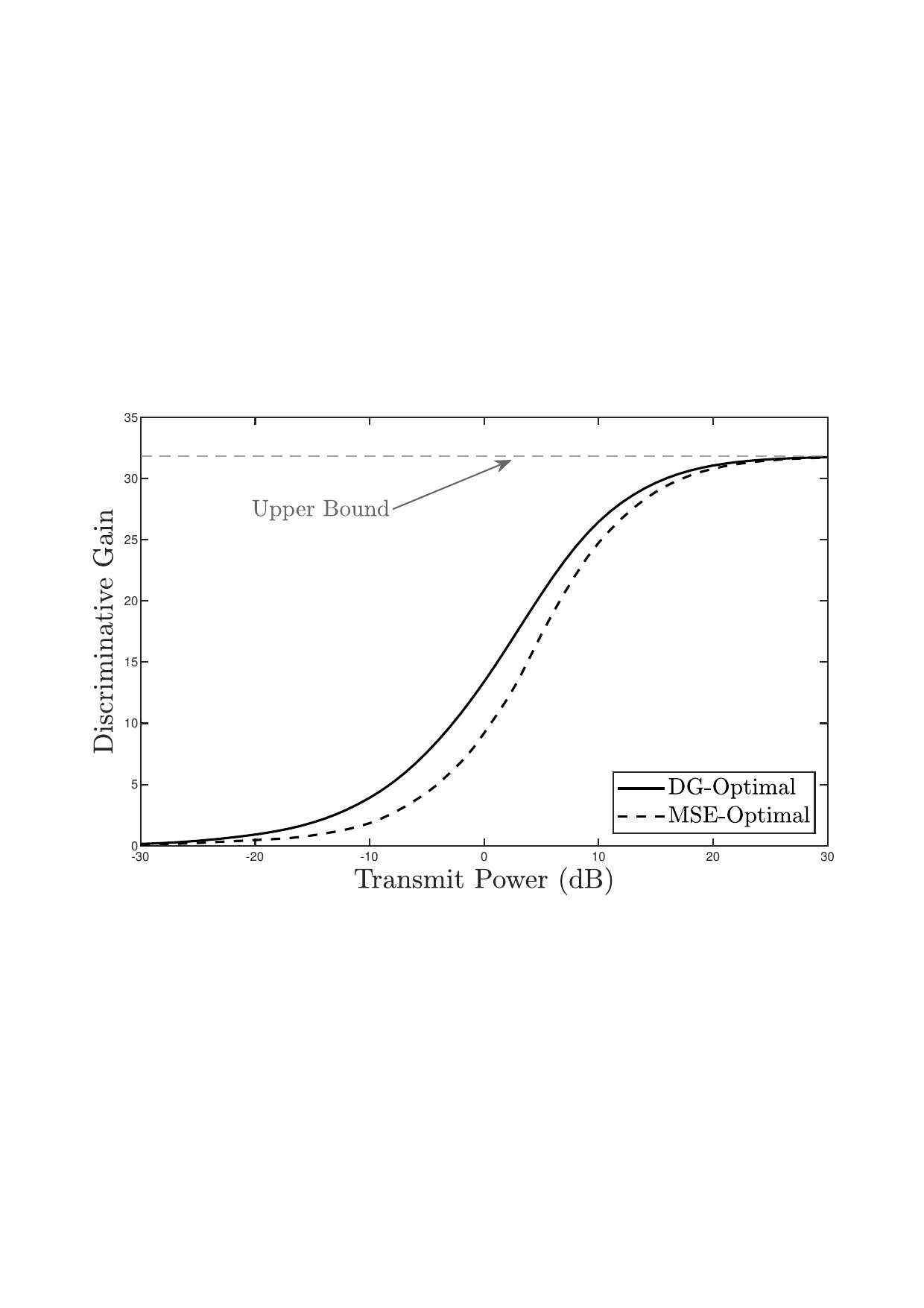}
    \caption{Discriminative gain versus transmit power under DG-optimal and MSE-optimal designs.}
    \label{Fig_MSE_optvsDG_opt}
\end{figure}

\subsubsection{Saving Transmit Power for Sensing}
 Fig. \ref{Fig_MSE_optvsDG_opt} shows that DG-optimal transmission consistently outperforms MSE-optimal transmission in terms of power efficiency. This result highlights an important insight: inference-oriented communication inherently requires less transmit power to achieve a target inference accuracy compared with MSE-based design. Consequently, the saved power can be reallocated to the sensing, as indicated by the transmit signal model in \eqref{Eq:signalTransmitted}. 

\section{Optimized S\&C Power Allocation}\label{sec:Task-Oriented_Feature_Transmission}
Building on the results of the previous sections, where inference-oriented feature transmission could save communication power and thus reduce the power requirement for sensing, we now proceed to jointly optimize the sensing and communication power allocation, explicitly taking into account the impact of sensing noise. The optimization problem is formulated as
\begin{align*} (\mathrm{P}5) \max_{P_{\mathsf r},P_{\mathsf c}} \sum_{n=0}^{N-1}\widetilde{\mathrm{DG}}_{n} \quad\text{s.t.}   \eqref{Eq:P_r_P_c},\eqref{eq:power_constraint},\end{align*}
where $\widetilde{\mathrm{DG}}_{n}={\frac{|h_n b_n|^2\,(\mu_{0,n} - \mu_{1,n})^2}{|h_n b_n|^2\tilde{\sigma}_n^2+\sigma_w^2}}$ and $\tilde{\sigma}_n$ denotes the effective per-class variance with sensing noise, i.e., $\tilde{\sigma}_n = \sigma_n + \frac{\sigma_{\mathsf r}^2}{P_{\mathsf r}},$ as defined in \eqref{Eq:Observation}. 

For FMCW sensing, maximum DG is equivalent to maximizing the effective sensing power. Hence, the optimal solution $\{P_{\mathsf r}, P_{\mathsf c}\}$ is achieved if and only if the equality holds in the power constraints \eqref{Eq:P_r_P_c}. Since the optimal carrier structure is already characterized by \eqref{eq:optimalsolutiondg}, the optimal solution $\{P_{\mathsf r}, P_{\mathsf c}\}$ can be obtained by simply searching over the feasible power allocation pair $(P_{\mathsf r}, P_{\mathsf c})$ rather than solving problem $(\mathrm{P}5)$ directly. This reduces the problem to a one-dimensional search along the line $\beta P_{\mathsf c} + (1-\beta) P_{\mathsf r} = P,$ where $\beta \in [0,1]$ is the allocation ratio.

\section{Numerical Results} \label{sec:results}
In this section, numerical results are provided to validate the analytical results. 
\subsection{Settings}
We adopt multilayer perceptron (MLP) and support vector machine (SVM) as the inference models. Both are trained on noise-free PCA features extracted from a human posture recognition dataset \cite{wen2023task}. The dataset contains 12,000 samples, which are split into training, validation, and test sets in a 3:1:1 ratio. The task is formulated as a four-class human posture classification problem, where each sample corresponds to a time–frequency spectrogram obtained via FMCW sensing. Key system parameters are set as follows: $\sigma_{\mathsf r}^2 = \sigma_w^2 = 0.1$, $P \in [-5,10]$ dB, and $M = N = 8$.

\subsection{Results}
We first re-examine Fig.~\ref{Fig_MSE_optvsDG_opt} across different inference models to validate the preceding observations (see Fig.~\ref{Fig_MSE_optvsDG_opt_inference}). Then, the trade-off between communication and sensing power under inference performance maximization is revealed (see Fig.~\ref{Fig_results_compare_P_r_vs_P_c}). 
\begin{figure}[h]
	\centering
        \begin{minipage}{0.22\textwidth}
		{\includegraphics[width=\textwidth]{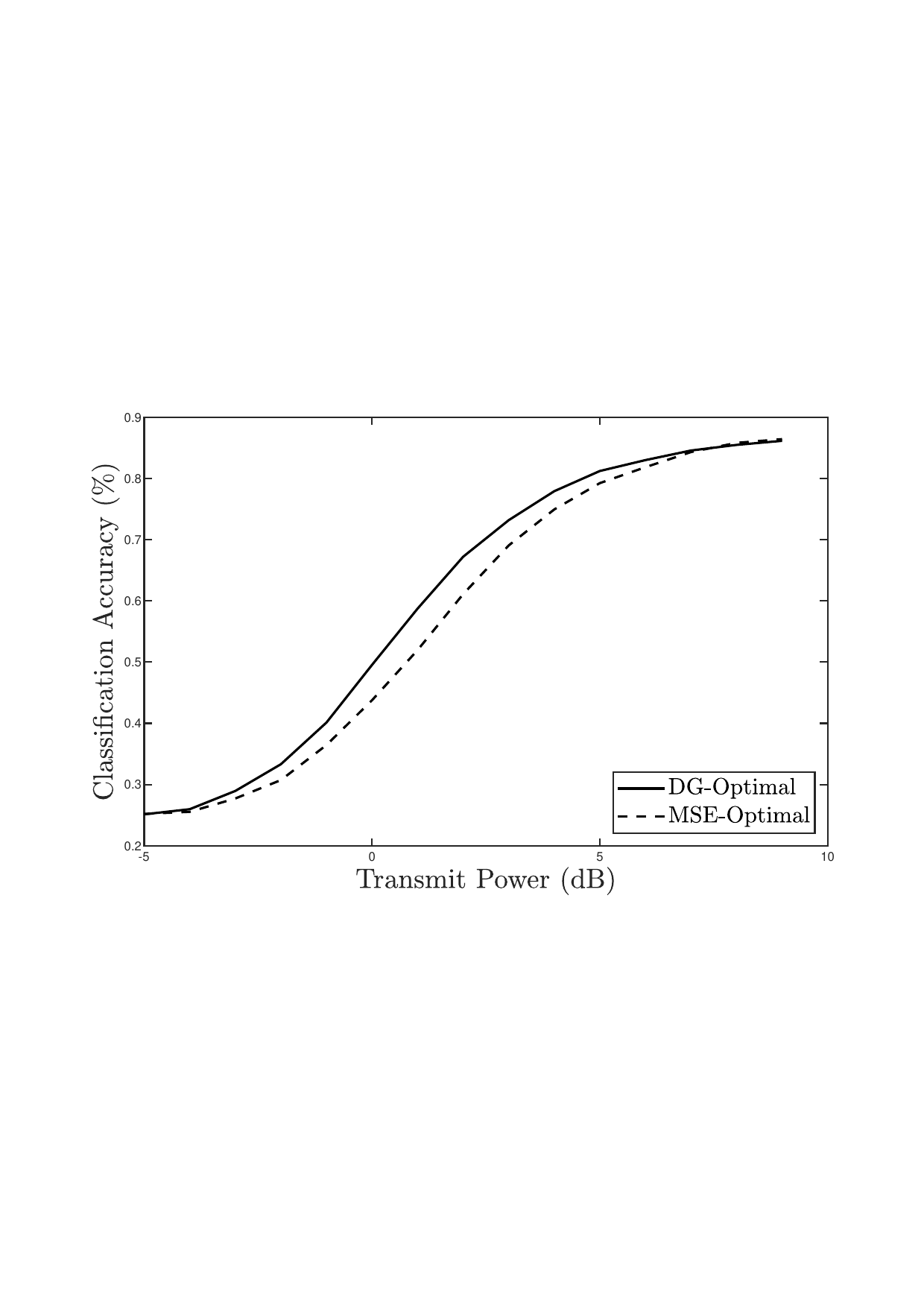}} 
            \caption*{(a)}\label{Fig_mlp}
	\end{minipage}
	\begin{minipage}{0.22\textwidth}
		{\includegraphics[width=\textwidth]{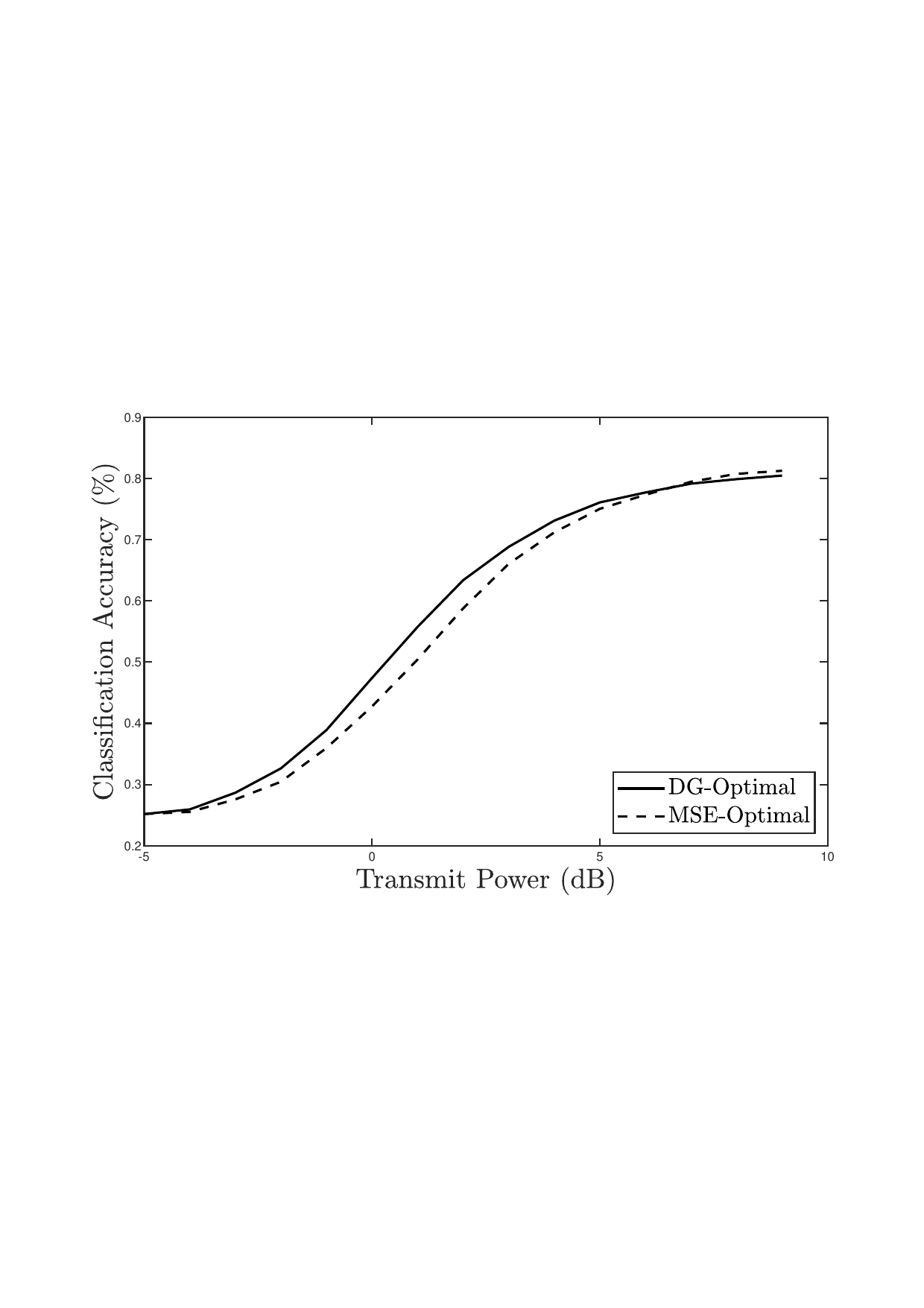}}
		\caption*{(b)}\label{Fig_svm}
	\end{minipage}
\caption{(a) Inference performance versus communication power $P_{\mathsf c}$ under the MLP model. (b) Inference performance versus communication power $P_{\mathsf c}$ under the SVM model. }\label{Fig_MSE_optvsDG_opt_inference}
\end{figure}

\begin{figure}[!h]
    \centering
    \includegraphics[width=0.475\textwidth]{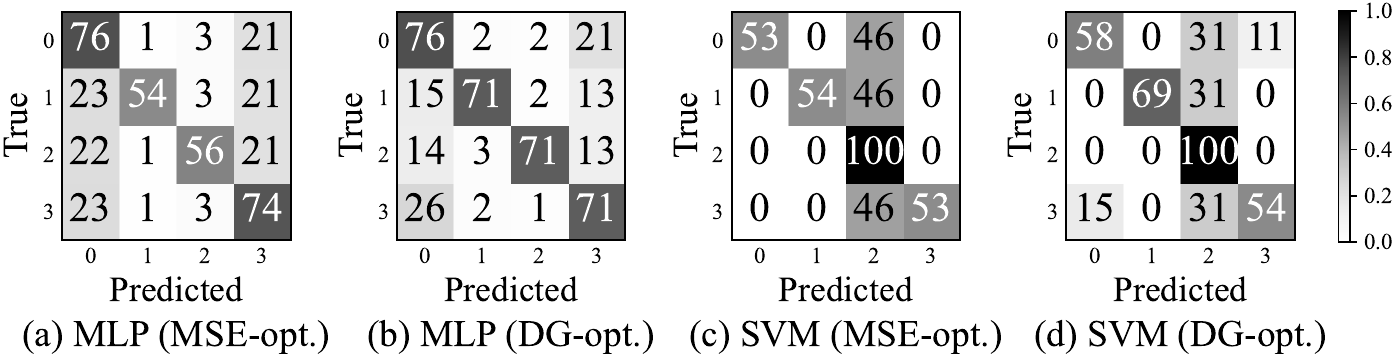}
    \caption{Confusion matrices under $P_{\mathsf c}=2$ dB for different inference models and optimality criteria.}
    \label{fig_CM_2x2_P2dB_K200}
\end{figure}

In Fig.~\ref{Fig_MSE_optvsDG_opt_inference}, both the SVM and MLP models exhibit trends consistent with Fig.~\ref{Fig_MSE_optvsDG_opt}. In the low-SNR regime, power is selectively allocated to features with strong DG or favorable channel conditions, resulting in superior inference performance compared with MSE-optimal designs. When transmit power is very low, inference accuracy approaches random guessing at $1/L$, whereas at very high power, performance saturates due to the intrinsic class separability and the network capacity. Moreover, Fig.~\ref{fig_CM_2x2_P2dB_K200} shows that DG-optimal design yields a more balanced classification accuracy across all categories, highlighting the benefit of inference-oriented optimization.
\begin{figure}[!h]
	\centering
        \begin{minipage}{0.4\textwidth}
		{\includegraphics[width=\textwidth]
         {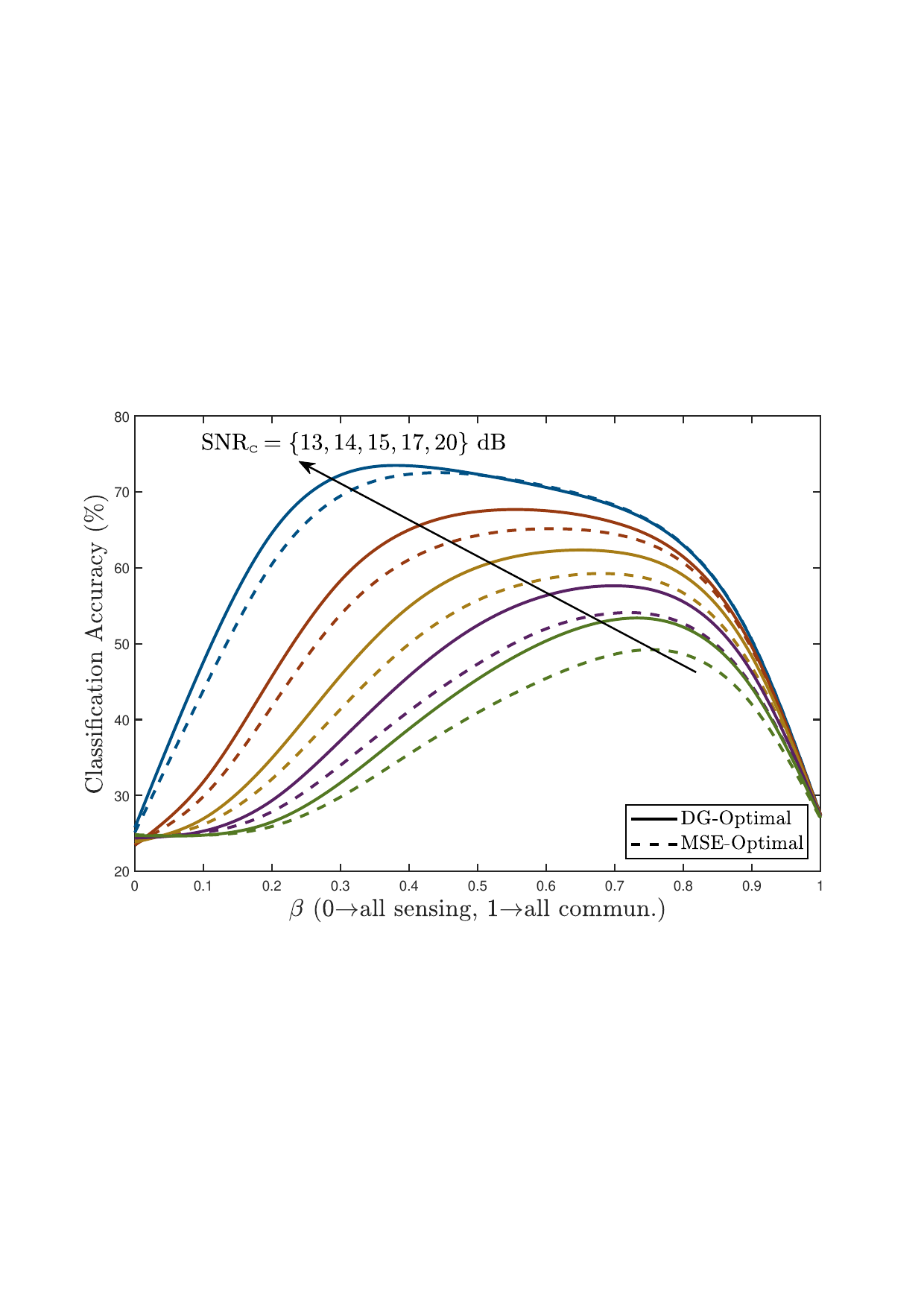}}
		\caption*{(a)}\label{Fig_L}
	\end{minipage}
            \begin{minipage}{0.41\textwidth}
		{\includegraphics[width=\textwidth]
         {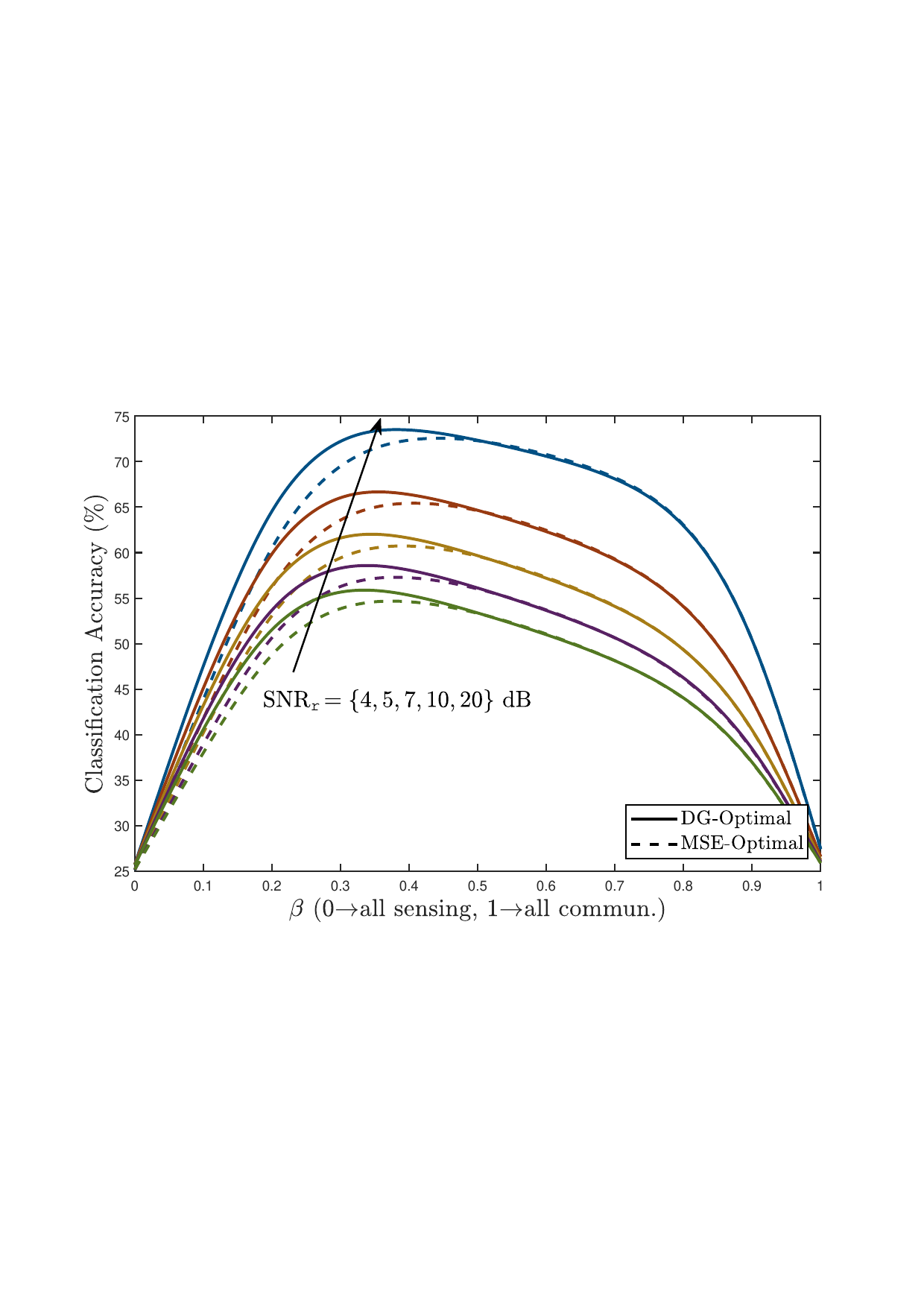}}
		\caption*{(b)}\label{Fig_L}
	\end{minipage}
\caption{(a) Inference performance versus allocation ratio $\beta$ under different $\mathrm{SNR}_{\mathsf c}$. (b) Inference performance versus allocation ratio $\beta$ under different $\mathrm{SNR}_{\mathsf r}$. }\label{Fig_results_compare_P_r_vs_P_c}
\end{figure}

In Fig.~\ref{Fig_results_compare_P_r_vs_P_c}(a), the sensing SNR $\mathrm{SNR}_{\mathsf r}$ is fixed at 20 dB while the communication SNR $\mathrm{SNR}_{\mathsf c}$ is varied. Conversely, Fig.~\ref{Fig_results_compare_P_r_vs_P_c}(b) fixes $\mathrm{SNR}_{\mathsf c}$ and varies $\mathrm{SNR}_{\mathsf r}$. Two key observations can be made from Fig.~\ref{Fig_results_compare_P_r_vs_P_c}(a):
(i) as $\mathrm{SNR}_{\mathsf c}$ increases, the power fraction allocated to $P_{\mathsf c}$ decreases because the marginal communication gain saturates, making additional power more valuable for sensing;
(ii) the performance gap between DG-optimal and MSE-optimal designs narrows as $\mathrm{SNR}_{\mathsf c}$ grows, since the system becomes less power-limited and the MSE solution approaches the DG-optimal one. Fig.~\ref{Fig_results_compare_P_r_vs_P_c}(b) shows a similar trend when $\mathrm{SNR}_{\mathsf r}$ is varied, confirming the robustness of this trade-off.

\section{Conclusion} \label{sec:conclusion}
This paper studied the coordination gain of ISAC under a CE framework. Closed-form solutions for DG-optimal and MSE-optimal transceiver designs were derived, revealing water-filling-type structures and explicit S\&C tradeoffs. The developed results suggested that DG-optimal design achieves more power-efficient transmission, particularly in the low-SNR regime, by selectively allocating power to key features and saving power for sensing. Numerical experiments validated the theoretical findings and confirmed consistent improvements in inference accuracy and power efficiency over MSE-optimal schemes.


\bibliographystyle{IEEEtran}
\bibliography{reference/mybib}

\begin{thebibliography}{10}
\providecommand{\url}[1]{#1}
\csname url@samestyle\endcsname
\providecommand{\newblock}{\relax}
\providecommand{\bibinfo}[2]{#2}
\providecommand{\BIBentrySTDinterwordspacing}{\spaceskip=0pt\relax}
\providecommand{\BIBentryALTinterwordstretchfactor}{4}
\providecommand{\BIBentryALTinterwordspacing}{\spaceskip=\fontdimen2\font plus
\BIBentryALTinterwordstretchfactor\fontdimen3\font minus \fontdimen4\font\relax}
\providecommand{\BIBforeignlanguage}[2]{{%
\expandafter\ifx\csname l@#1\endcsname\relax
\typeout{** WARNING: IEEEtran.bst: No hyphenation pattern has been}%
\typeout{** loaded for the language `#1'. Using the pattern for}%
\typeout{** the default language instead.}%
\else
\language=\csname l@#1\endcsname
\fi
#2}}
\providecommand{\BIBdecl}{\relax}
\BIBdecl

\bibitem{liu2022integrated}
F.~Liu, Y.~Cui, C.~Masouros, J.~Xu, T.~X. Han, Y.~C. Eldar, and S.~Buzzi, ``Integrated sensing and communications: Toward dual-functional wireless networks for 6g and beyond,'' \emph{{IEEE} J. Sel. Areas Commun.}, vol.~40, no.~6, pp. 1728--1767, 2022.

\bibitem{dong2025communication}
F.~Dong, F.~Liu, S.~Lu, Y.~Xiong, Q.~Zhang, Z.~Feng, and F.~Gao, ``Communication-assisted sensing in 6{G} networks,'' \emph{{IEEE} J. Sel. Areas Commun.}, vol.~43, no.~4, pp. 1371--1386, 2025.

\bibitem{luo2005universal}
Z.-Q. Luo, ``Universal decentralized estimation in a bandwidth constrained sensor network,'' \emph{{IEEE} Trans. Inf. Theory}, vol.~51, no.~6, pp. 2210--2219, 2005.

\bibitem{berger1996ceo}
T.~Berger, Z.~Zhang, and H.~Viswanathan, ``The {CEO} problem [multiterminal source coding],'' \emph{{IEEE} Trans. Inf. Theory}, vol.~42, no.~3, pp. 887--902, 1996.

\bibitem{wen2023task}
D.~Wen, X.~Jiao, P.~Liu, G.~Zhu, Y.~Shi, and K.~Huang, ``Task-oriented over-the-air computation for multi-device edge {AI},'' \emph{IEEE Trans. Wireless Commun.}, vol.~23, no.~3, pp. 2039--2053, 2023.

\bibitem{lan2022progressive}
Q.~Lan, Q.~Zeng, P.~Popovski, D.~G{\"u}nd{\"u}z, and K.~Huang, ``Progressive feature transmission for split classification at the wireless edge,'' \emph{IEEE Trans. Wireless Commun.}, vol.~22, no.~6, pp. 3837--3852, 2022.

\bibitem{chen2024view}
X.~Chen, K.~B. Letaief, and K.~Huang, ``On the view-and-channel aggregation gain in integrated sensing and edge {AI},'' \emph{IEEE J. Sel. Areas Commun.}, vol.~42, no.~9, pp. 2292--2305, 2024.

\bibitem{wang2025ultra}
Z.~Wang, A.~E. Kal{\o}r, Y.~Zhou, P.~Popovski, and K.~Huang, ``Ultra-low-latency edge inference for distributed sensing,'' \emph{IEEE Trans. Wireless Commun.}, 2025.

\bibitem{li2023integrated}
X.~Li, F.~Liu, Z.~Zhou, G.~Zhu, S.~Wang, K.~Huang, and Y.~Gong, ``Integrated sensing, communication, and computation over-the-air: {MIMO} beamforming design,'' \emph{IEEE Trans. Wireless Commun.}, vol.~22, no.~8, pp. 5383--5398, 2023.

\bibitem{liu2018mu}
F.~Liu, C.~Masouros, A.~Li, H.~Sun, and L.~Hanzo, ``{MU-MIMO} communications with {MIMO} radar: {F}rom co-existence to joint transmission,'' \emph{IEEE Trans. Wireless Commun.}, vol.~17, no.~4, pp. 2755--2770, 2018.

\bibitem{jankiraman2018fmcw}
M.~Jankiraman, \emph{FMCW radar design}.\hskip 1em plus 0.5em minus 0.4em\relax Artech House, 2018.

\bibitem{shao2021federated}
Y.~Shao, D.~G{\"u}nd{\"u}z, and S.~C. Liew, ``Federated edge learning with misaligned over-the-air computation,'' \emph{IEEE Trans. Wireless Commun.}, vol.~21, no.~6, pp. 3951--3964, 2021.

\bibitem{dong2025robust}
B.~Dong, B.~Cao, G.~Gui, and Q.~Zhang, ``Robust deep joint source-channel coding enabled distributed image transmission with imperfect channel state information,'' \emph{IEEE Trans. Wireless Commun.}, early access, Sep. 2025, doi: 10.1109/TWC.2025.3601155.

\bibitem{fukunaga2013introduction}
K.~Fukunaga, \emph{Introduction to statistical pattern recognition}.\hskip 1em plus 0.5em minus 0.4em\relax Amsterdam, The Netherlands: Elsevier, 2013.

\bibitem{tse2005fundamentals}
D.~Tse and P.~Viswanath, \emph{Fundamentals of wireless communication}.\hskip 1em plus 0.5em minus 0.4em\relax Cambridge, U.K.: Cambridge Univ. Press, 2005.

\bibitem{kay1993fundamentals}
S.~M. Kay, \emph{Fundamentals of statistical signal processing: {E}stimation theory}.\hskip 1em plus 0.5em minus 0.4em\relax Prentice-Hall, Inc., 1993.

\bibitem{boyd2004convex}
S.~P. Boyd and L.~Vandenberghe, \emph{Convex optimization}.\hskip 1em plus 0.5em minus 0.4em\relax Cambridge, U.K.: Cambridge Univ. Press, 2004.

\end{thebibliography}

\end{document}